\providecommand{\algorithmname}{Algorithm}
\theoremstyle{plain}
\newtheorem{thm}{\protect\theoremname}[section]
\newenvironment{proof}[1][\protect\proofname]{\par
\normalfont\topsep6\p@\@plus6\p@\relax
\trivlist
\itemindent\parindent
\item[\hskip\labelsep\scshape #1]\ignorespaces
}{%
\endtrivlist\@endpefalse
}
\providecommand{\proofname}{Proof}
\theoremstyle{plain}
\newtheorem{lem}{\protect\lemmaname}[section]
\newtheorem{pro}[thm]{\protect\propositionname}
\newtheorem{cor}{\protect\corollaryname}[section]
\theoremstyle{definition}
\newtheorem{exam}{\protect\examplename}[section]
\providecommand{\theoremname}{Theorem}
\providecommand{\lemmaname}{Lemma}
\providecommand{\propositionname}{Proposition}
\providecommand{\corollaryname}{Corollary}
\providecommand{\examplename}{Example}
\begin{document}
\begin{frontmatter}

\title{Independent sets of closure operations}

\author[nhs]{Nguyen Hoang Son }

\ead{nhson@hueuni.edu.vn}

\address[nhs]{Department of Mathematics, College of Sciences, Hue University, Viet
Nam }

\begin{abstract} In this paper independent sets of closure operations are introduced. We characterize minimal keys and antikeys of closure operations in terms of independent sets. We establish an expression on the connection between minimal keys and antikeys of closure operations based on independent sets. We construct two combinatorial algorithms for finding all minimal keys and all antikeys of a given closure operation based on independent sets. We estimate the time complexity of these algorithms. Finally, we give an NP-complete problem concerning nonkeys of closure operations.
\end{abstract}
\begin{keyword} Closure operation, closure system, closed set, minimal key, antikey, independent set, hypergraph, minimal transversal.

MSC[2010]: 68R05, 05C65
\end{keyword}
\end{frontmatter}

\section{Introduction}
Closure operations and closure systems appear in many fields in pure or applied mathematics and computer science. Many papers have appreared concerning lattices and combinatorial problems in closure operations which especially are the closed sets, minimal keys and antikeys of closure operations (see \cite{bu2,ca,de2,ng2,son}). The subsets $X\subseteq U$ satisfying $f(X)=U$ are called the keys of closure operation $f\in Cl(U)$. Clearly, this concept has an important role for the database. The data of a key determine the individual uniquely. The really important ones are the minimal keys: they are keys containing no other key as a proper subset (see e.g. \cite{bu2, de2, son}). The antikeys of closure operations (i.e. maximal non-keys) play an essential role in extremal problems of closure operations as well as in finding minimal keys (e.g., see \cite{bu2, son}). The closed set $Y$ of $f$ are defined by $f(Y)=Y$. The set of all closed sets of $f$ is called the closure system or meet-semilattice. Also, the closed sets  have been widely studied (see \cite{bu2, ca,ng2, son}).

The theory of hypergraphs is an important subfield of discrete mathematics with many relevant applications in both theoretical and applied computer science. Especially, it is a very useful tool for solution of combinatorial problems (e.g., see \cite{ber, de3,e}). The transversals and minimal transversals of a hypergraph are important concepts in this theory. The set of all minimal keys and the set of antikeys of closure operations form simple hypergraphs.

In this paper, we introduce the notion of independent sets of closure operation $f$. With a subset $X\subseteq U$, independent set $I(X)$ of $f$ be a subset determined by $U\setminus f(X)$. Denote by $I(f)$ and by $MI(f)$, the family of all independent sets and the family of all minimal independent sets of $f$, respectively. We show that generating all minimal keys of $f$ can be reduced to generating all minimal transversals of family $MI(f)$. We also give a representation of the set of all antikeys of $f$ in terms of independent sets. Based on these results, we establish the connection between minimal keys and antikeys of closure operations. Also, we construct two combinatorial algorithms finding all minimal keys and antikeys of a given closure operation by independent sets. Finally, in this paper we give an NP-complete problem. 

The paper is structured as follows. After an introduction section, in Section 2, we recall the definitions and the basic results of closure operations and hypergraphs. In Section 3 we introduce the notion of independent sets of closure operations. After that, we characterize the set of all minimal keys and the set of all antikeys of a closure operation by  independent sets. From these results, we establish an expression on the connection between minimal keys and antikeys of closure operations. In Section 4 we construct two combinatorial algorithms for finding all minimal keys and all antikeys of a given closure operation based on independent sets. We prove that the time complexitys of these algorithms are exponential in the number of elements of $U$. Finally, we give an NP-complete problem concerning nonkey of closure operations in Section 5.

\section{Definitions and preliminaries}
In this section, we recall the definitions and the preliminary results. The definitions and results in this section can be found in \cite{be,ber, da,de4,de1,son}.

Let $U$ be a nonempty finite set. Here $\mathcal{P}(U)$ denotes the power set of $U$, that is, the set of all subsets of $U$.  The mapping $f:\mathcal{P}(U)\to\mathcal{P}(U)$ is called a \textit{closure
operation} on $U$ if it satisfies the following conditions

(C1) $X\subseteq f(X)$,

(C2) $X\subseteq Y$ implies $f(X)\subseteq f(Y)$,

(C3) $f(f(X))=f(X)$,

\noindent for every $X,Y\subseteq U$.

We denote by $Cl(U)$ the set of all closure operations on $U$. 

Let $f\in Cl(U)$ and $X\subseteq U$. Set $X$ is called \textit{closed} of $f$ if $f(X)=X$. The family of closed sets is denoted $Closed(f)$. Therefore, $Closed(f)=\{X\subseteq U : f(X)=X\}$. It is easy to see that $U\in Closed(f)$ and $X,Y\in Closed(f)\Rightarrow X\cap Y\in Cloesd(f)$. Then we also can rewrite  $Closed(f)=\{f(X) : X\subseteq U\}$.

A family $\mathcal S$ of subsets of $U$ is called a \textit{closure system} (or \textit{meet-semilattice}) on $U$ if it satisfies the following conditions

(S1) $U\in \mathcal S$

(S2) $\forall \mathcal A\subseteq \mathcal P(U),  \emptyset \not=\mathcal A \subseteq \mathcal S \Rightarrow \bigcap \mathcal A\in \mathcal S$.

It can be seen that, if $\mathcal S$ is a closure system, and we define $f_{\mathcal S}(X)$ as
\[f_{\mathcal S}(X)=\bigcap \{Y\in \mathcal S:X\subseteq Y\}
\]
then $f_{\mathcal S}\in Cl(U)$. Conversely, if $f\in Cl(U)$, then there is exactly one closure system $\mathcal S$ on $U$ so that $f=f_{\mathcal S}$, where
\[\mathcal S=\{X\subseteq U:f(X)=X\}.
\]

Thus, $Cloesd(f)$ is a closure system. This means that there is a 1-1 correspondence between closure operations and closure systems. 

Let $f\in Cl(U)$ and $K\subseteq U$. Set $K$ is called a \textit{key} of $f$ if  $f(K)=U$. A key is called \textit{minimal} if every $a\in K$ then $f(K\setminus\{a\})$ not a key. We denote by $Key(f)$ the set of all minimal keys of $f$. 

A subset $K^{-1}\subseteq U$ is called a \textit{antikey} of $f$ if $f(K^{-1})\not=U$ and $\forall a\in U\setminus K^{-1},f(K^{-1}\cup\{a\})=U$. Denote $Antikey(f)$ the set of all antikeys of $f$. Therefore, we can see that $Antikey(f)$ is the set of all maximal nonkeys of $f$.

In \cite{son} we proved the connection between  minimal keys and antikeys of closure operations as follows:
$$\bigcup Key(f)=U\setminus \bigcap Antikey(f).$$

\begin{exam}\label{ex21}
The following mappings are basic closure operations:

(1) \textit{A maximal mapping} $m:\mathcal{P}(U)\to\mathcal{P}(U)$
is determined by $m(X)=U$ for every $X\subseteq U$. Then 
$$Closed(m)=\{U\}, Key(m)=\{\emptyset\} \text{ and } Antikey(m)=\emptyset.$$

(2) \textit{An identity mapping} $i:\mathcal{P}(U)\to\mathcal{P}(U)$
is determined by $i(X)=X$ for every $X\subseteq U$. Then 
$$Closed(i)=\mathcal P(U), Key(i)=\{U\} \text{ and } Antikey(i)=\{U\setminus \{a\}: a\in U\}.$$

(3) \textit{A translation mapping }$t_M:\mathcal{P}(U)\to\mathcal{P}(U)$
is determined by $t_M(X)= M\cup X$, where $M$ is a given
subset of $U$ and for every $X\subseteq U$. Then 
$$Closed(t_M)=\{M\cup X:X\subseteq U\}, Key(t_M)=\{U\setminus M\} \text{ and }$$ 
$$Antikey(t_M)=\{U\setminus \{a\}: a\in U\setminus M\}.$$
\end{exam}

Next, we introduce briefly the main concepts of hypergraph which will be needed in sequels. A hypergraph $\mathcal H$ is a pair $(V,\mathcal E)$, where $V$ is a finite set and $\mathcal E$  is a family of subsets of $V$. The elements of $V$ are called \textit{vertices}, and the elements of $\mathcal E$ \textit{edges}. Note that some authors, e.g. \cite{ber}, is required that the edge-set as well as each edge must be nonempty and that the union of all edges yields the vertex set. In this paper we do not require this. It is easy to see that a graph is a hypergraph with $|E|=2, \forall E\in \mathcal E$. For notational convenience, we will identify a hypergraph with its edge-set and vice versa if there is no danger of ambiguity. Therefore for hypergraph $\mathcal H=(V,\mathcal E)$, we write $E\in \mathcal H$ for $E\in \mathcal E$ etc.

A hypergraph $ {\cal   H}$ is called {\it simple} if it satisfies 
$$\forall E_i, E_j\in  {\cal   H}, E_i \subseteq E_j\Rightarrow E_i=E_j.$$ 

It can be seen that $Key(f)$ and $Antikey(f)$ are simple hypergraphs on $U$. 

A set $T\subseteq V$ is called a {\it transversal} of $ {\cal   H}$ (sometimes it is called {\it hitting set}) if it meets all edges of $ {\cal   H}$, i.e., $\forall E\in  {\cal   H}, T\cap E\not= \emptyset.$ We denote by $Trs({\cal  H})$ the family of all transversals of ${\cal   H}$. A transversal $T$ of $ {\cal   H}$ is called {\it minimal} if no proper subset $T'$ of $T$ is a transversal. The family of all minimal transversals of $ {\cal   H}$ called the transversal hypergraph of $ {\cal   H}$, and denoted by $Tr( {\cal   H})$. Clearly, $Tr( {\cal   H})$ is a simple hypergraph. 

Note that $\emptyset$ is a minimal transversal of $\mathcal H=\emptyset$, since each $T\subseteq V$ vacuously satisfies the transversal criterion, and that $\mathcal H=\{\emptyset\}$ has no transversal, since no $T\subseteq V$ has nonempty intersection with $\emptyset$.

The transversal hypergraphs have the following some basic properties

\begin{pro}[\cite{ ber}] \label{pro21} Let $\cal H$ and $\cal G$ two simple hypergraphs on $V$. Then

(1) $\mathcal H=Tr(\mathcal G)$ if and only if $\mathcal G=Tr(\mathcal H)$,

(2) $Tr(\mathcal H)=Tr(\mathcal G)$ if and only if $\mathcal H=\mathcal G$,

(3) $Tr(Tr(\mathcal H))=\mathcal H$.
\end{pro}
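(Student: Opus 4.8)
\emph{Proof proposal.} The plan is to derive everything from part~(3), since (1) and (2) are immediate consequences of $Tr(Tr(\mathcal H))=\mathcal H$. Indeed, if $\mathcal H=Tr(\mathcal G)$ then $Tr(\mathcal H)=Tr(Tr(\mathcal G))=\mathcal G$ by (3) applied to $\mathcal G$, and the converse implication is symmetric; this gives (1). For (2), the implication $\mathcal H=\mathcal G\Rightarrow Tr(\mathcal H)=Tr(\mathcal G)$ is trivial, and conversely $Tr(\mathcal H)=Tr(\mathcal G)$ yields $\mathcal H=Tr(Tr(\mathcal H))=Tr(Tr(\mathcal G))=\mathcal G$ by (3). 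So all the content lies in (3).

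The tool I would use throughout is the elementary complementation remark: for any hypergraph $\mathcal K$ on $V$ and any $S\subseteq V$, the set $S$ is a transversal of $\mathcal K$ if and only if no edge of $\mathcal K$ is contained in $V\setminus S$ (just rewrite $S\cap E\neq\emptyset$ as $E\not\subseteq V\setminus S$). Replacing $S$ by $V\setminus S$, this says: $V\setminus S$ is a transversal of $\mathcal K$ if and only if $S$ contains no edge of $\mathcal K$. Since $V$ is finite, whenever $\mathcal K$ has a transversal at all it has a minimal one below every transversal, so one can upgrade ``$S$ contains no edge of $\mathcal K$'' to ``some minimal transversal of $\mathcal K$ is disjoint from $S$''.

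I would then prove the two inclusions of $\mathcal H=Tr(Tr(\mathcal H))$ separately, both times using that $\mathcal H$ is simple (no edge is properly contained in another edge). For $\mathcal H\subseteq Tr(Tr(\mathcal H))$: a fixed $E\in\mathcal H$ meets every minimal transversal of $\mathcal H$ by definition, hence is a transversal of $Tr(\mathcal H)$; for its minimality, fix $a\in E$, observe that $E\setminus\{a\}$ contains no edge of $\mathcal H$ by simplicity, so by the remark $V\setminus(E\setminus\{a\})$ is a transversal of $\mathcal H$ and contains a minimal transversal $T^{*}$ disjoint from $E\setminus\{a\}$; thus $E\setminus\{a\}$ fails to hit $T^{*}$ and is not a transversal of $Tr(\mathcal H)$. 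For $Tr(Tr(\mathcal H))\subseteq\mathcal H$: take $T'\in Tr(Tr(\mathcal H))$; if $T'$ contained no edge of $\mathcal H$, the remark would make $V\setminus T'$ a transversal of $\mathcal H$ with a minimal transversal disjoint from $T'$, contradicting that $T'$ is a transversal of $Tr(\mathcal H)$; hence some $E\in\mathcal H$ satisfies $E\subseteq T'$, and since $E\in\mathcal H\subseteq Tr(Tr(\mathcal H))$ (first inclusion) and $Tr(Tr(\mathcal H))$ is simple, $E=T'$, so $T'\in\mathcal H$.

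The only delicate points — and where I expect the bookkeeping to need care rather than any genuine difficulty — are two. First, one must consistently distinguish a transversal from a minimal transversal: the arguments above really need the minimal transversal produced inside a complement, which is exactly where finiteness of $V$ enters. Second, the degenerate hypergraphs $\mathcal H=\emptyset$ and $\mathcal H=\{\emptyset\}$ flagged in the remark preceding the proposition must be dispatched by hand at the outset — there $Tr(\emptyset)=\{\emptyset\}$ and $Tr(\{\emptyset\})=\emptyset$, so $Tr(Tr(\cdot))$ sends each back to itself; treating them first also legitimises the step ``a transversal exists, hence a minimal one does'' in all remaining cases, since for a simple $\mathcal H$ other than $\{\emptyset\}$ we have $\emptyset\notin\mathcal H$ and so $V$ itself is a transversal of $\mathcal H$.
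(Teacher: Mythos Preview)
Your argument is correct and is the standard proof of this duality result; the reduction of (1) and (2) to (3), the complementation trick, and the two-inclusion proof of (3) using simplicity are all sound, and you handle the degenerate cases $\mathcal H=\emptyset$ and $\mathcal H=\{\emptyset\}$ properly. There is nothing to compare against: the paper does not prove Proposition~\ref{pro21} but simply cites it from Berge~\cite{ber}, so your proposal in fact supplies what the paper omits.
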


In \cite{de3} an algorithm finding the family of all minimal transversals of a given hypergraph (by induction) is presented.

\begin{algorithm}[H]
\caption{ (Finding the family of all minimal transversals)}
\label{al21}

Input: Let $ {\cal   H}=\{E_1, \ldots, E_m\}$ be a hypergraph on $V$.

Output: $Tr( {\cal   H}).$

Method:

Step 0. We set $\mathcal L_1=\{\{a\}: a\in E_1\}$. It is obvious that $\mathcal L_1=Tr(\{E_1\})$.

Step $q+1$. ($q<m$) Assume that
$$\mathcal L_q=\mathcal S_q\cup \{B_1, \ldots, B_{t_q}\},$$

\noindent where $B_i\cap E_{q+1}=\emptyset, i=1, \ldots, t_q $ and $\mathcal S_q=\{A\in \mathcal L_q : A\cap E_{q+1}\not= \emptyset\}$.

\smallskip
For each $i$ $(i=1, \ldots, t_q)$ constructs the set $\{B_i\cup \{b\} : b\in E_{q+1}\}$. Denote them by $A_1^i, \ldots, A_{r_i}^i (i=1, \ldots, t_q)$. Let
$$\mathcal L_{q+1}=\mathcal S_q\cup \{A_p^i  : A\in \mathcal S_q \Rightarrow A\not\subset A_p^i, 1\leq i\leq t_q, 1\leq p\leq r_i\}.$$

Let $Tr(\mathcal H)=\mathcal L_m$.
\end{algorithm} 
It can be seen that the determination of $Tr({\cal   H})$ based on Algorithm \ref{al21} does not depend on the order of $E_1, \ldots, E_m$. The time complexity of Algorithm \ref{al21} is exponential in $n$. However, in many cases, Algorithm \ref{al21} is very effective. Indeed, if we denote $\mathcal L_q=\mathcal S_q\cup \{B_1, \ldots, B_{t_q}\}$, $l_q =|\mathcal L_q|$ $(1\leq q \leq m-1)$  and $n=|V|$, it can be seen that the worst-case time complexity of Algorithm \ref {al21} is
$${\cal   O}(n^2\sum_{q=0}^{m-1}t_qu_q),$$
where $l_0=t_0=1$ and
$$
u_q=\begin{cases} l_q-t_q, & \text { if } l_q>t_q;\\
		1, & \text { if } l_q=t_q.
\end{cases}
$$

Clearly, in each step of Algorithm \ref{al21}, $\mathcal L_q$ is a simple hypergraph. It is known that the size of arbitrary simple hypergraph on $V$ cannot be greater than $\begin{pmatrix} 
		 n \\
		 \left\lfloor n/2 \right\rfloor 
\end{pmatrix}$, and
$$\begin{pmatrix} 
		 n \\
		 \left\lfloor n/2 \right\rfloor 
\end{pmatrix}\simeq \cfrac{2^{n+1/2}}{(\pi.n)^{1/2}}.$$ 

From this, the worst-case time complexity of Algorithm \ref{al21} cannot be more than exponential in the $n$. In cases for which $l_q\leq l_m$ $(q=1, \ldots, m-1)$, it is easy to see that the time complexity of Algorithm \ref{al21} is not greater than ${\cal   O}(n^2m|Tr({\cal   H})|^2).$ Thus, in these cases this algorithm finds $Tr({\cal   H})$ in polynomial time in $n,m$ and $|Tr({\cal   H})|$. Obviously, if $m$ is small, then this algorithm is very effective. It only requires polynomial time in $n$.

We now illustrate Algorithm \ref{al21} by the simple example as follows 
\begin{exam} Let $V=\{a,b,c,d,e\}$ and hypergraph $\mathcal H =\{\{a,c\},\{b,c,e\},$ $\{c,d\}\}$. Then we have

$\mathcal L_1=\{\{a\},\{c\}\}$

$\mathcal L_2=\{\{a,b\},\{a,e\},\{c\}\}$

$\mathcal L_3=\{\{a,b,d\},\{a,e,d\},\{c\}\}$.

Consequently 
$$Tr(\mathcal H)=\{\{a,b,d\},\{a,e,d\},\{c\}\}.$$
\end{exam}

\section{Independent set, minimal key and antikey of closure operations}
In this section, we introduce the concept of independent sets of closure operations. We characterize minimal keys and antikeys of closure operations in terms of independent sets.

Let $f\in Cl(U)$ and $X\subseteq U$. Set 
$$I(X)=\{a\in U: a\not\in f(X)\}.$$

Then we say that $I(X)$ is an \textit{independent set} of $f$. Denote by $I(f)$ the family of all independent set of $f$. Clearly, we have   $I(X)=U\setminus f(X)$ and $\emptyset \in I(f)$. Furthermore, we are easy to see that $X$ is a key of $f$ if and only if $I(X)=\emptyset$.

Note that $I(i)=Closed(i)=\mathcal P(U)$, and hence $|I(i)|=|Closed(i)|=2^{|U|}$. Consequently, we have $1\leq |I(f)|\leq 2^n$ for all $f\in Cl(U)$ and $|U|=n$.

We next set 
$$MI(f)=\{Y\in I(f): Y\not=\emptyset, (\forall Z\in I(f)\Rightarrow Y\not\subseteq Z)\}.$$

Family $MI(f)$ is called the family of all minimal independent sets of $f$. It can be seen that $MI(f)$ is a simple hypergraph on $U$. Remark that the number of elements of $MI(f)$ is always very small. We denote by $MAX(\mathcal S)$ the family of maximal elements of family $\mathcal S\subseteq \mathcal P(U)$. By the definition of the independent set of closure operation $f$, we can easily see that family $MI(f)$ can be represented by closure system $Closed(f)$ as follows:
$$\overline{MI(f)}=MAX(Closed(f)\setminus \{U\})$$
or
$$MI(f)=\overline{MAX(Closed(f)\setminus \{U\})}.$$

We consider again Example \ref{ex21}. Then we have 

$\bullet$ $I(m) = \{\emptyset\}, MI(m)= \emptyset$.

$\bullet$ $I(i) = \mathcal P(U),  MI(i)= \{\{a\}: a\in U\}$.

$\bullet$ $I(t_M) = \{U\setminus X\setminus  M: X\subseteq U\}, MI(t_M) = \{\{a\}:a\in U\setminus M\} $.

Now we study minimal keys of a closure operation by means of independent sets.  We show that generating all minimal keys of a closure operation $f$ can be reduced to generating all minimal transversals of a family of all minimal independent sets $MI(f)$.

First, we use the following helpful lemmas:

\begin{lem}\label{le31} If $Y\not=\emptyset$ is an independent set of $f$, then $U\setminus Y$ is not a key of $f$.
\end{lem}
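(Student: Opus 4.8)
The plan is to unwind the definitions and use monotonicity. Suppose $Y \neq \emptyset$ is an independent set of $f$, so by definition $Y = I(X) = U \setminus f(X)$ for some $X \subseteq U$. Since $Y$ is nonempty, pick $a \in Y$; then $a \notin f(X)$.

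First I would show that $f(U \setminus Y) \subseteq f(X)$. From $Y = U \setminus f(X)$ we get $U \setminus Y = f(X)$, so in fact $U \setminus Y$ equals $f(X)$ on the nose. Applying (C3) gives $f(U \setminus Y) = f(f(X)) = f(X) = U \setminus Y$. In particular $f(U \setminus Y) = U \setminus Y \neq U$ because $Y \neq \emptyset$. Hence $U \setminus Y$ is not a key of $f$, which is exactly the claim.

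The only subtlety — and it is minor — is the need to recall that every independent set is of the form $I(X) = U \setminus f(X)$ for a suitable $X$, and that by (C3) such a set is itself closed; there is essentially no obstacle here, since the argument is a direct computation from (C1)–(C3). I would state it compactly: if $Y = U \setminus f(X)$ with $Y \neq \emptyset$, then $f(U\setminus Y) = f(f(X)) = f(X) = U \setminus Y \subsetneq U$, so $U \setminus Y$ is not a key.
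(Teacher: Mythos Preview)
Your proof is correct and follows essentially the same idea as the paper's: both rest on the observation that $U\setminus Y = f(X)$ together with idempotence (C3) forces $f(U\setminus Y)=f(X)\ne U$. The paper phrases this as a contradiction (assume $f(U\setminus Y)=U$ and deduce $Y=\emptyset$), whereas you compute $f(U\setminus Y)=U\setminus Y\subsetneq U$ directly; the line where you pick $a\in Y$ is unused and can be dropped.
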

\begin{proof} Let $\emptyset \not= Y=I(X)\in I(f)$. Suppose that $f(U\setminus Y)=U$. By the properties of closure operations and the definition of the independent set of closure operations, we have
$$f(X)=f(f(X))=U.$$

It follows that
$$U\setminus Y=U\setminus I(X)=f(X)=U.$$

Therefore, $Y=\emptyset$. This contradicts the hypothesis $Y\not=\emptyset$.
\end{proof}

\begin{lem}\label{le32}$X$ is not a key of $f$ if and only if $U\setminus X$ is a transversal of $Key(f)$.
\end{lem}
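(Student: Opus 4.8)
The plan is to prove both implications by contraposition, using only the monotonicity axiom (C2) together with the elementary fact that every key of $f$ contains a minimal key, which holds because $U$ is finite.

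First I would treat the ``only if'' direction in contrapositive form: assume $U\setminus X$ is \emph{not} a transversal of $Key(f)$ and deduce that $X$ is a key. By definition of transversal there is a minimal key $K\in Key(f)$ with $K\cap(U\setminus X)=\emptyset$, that is, $K\subseteq X$. Since $f(K)=U$, axiom (C2) yields $U=f(K)\subseteq f(X)\subseteq U$, hence $f(X)=U$ and $X$ is a key.

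For the ``if'' direction, again arguing by contraposition, I would assume $X$ is a key, i.e.\ $f(X)=U$, and produce an edge of $Key(f)$ missed by $U\setminus X$. Because $U$ is finite, among the subsets of $X$ that are keys of $f$ there is one, say $K$, minimal with respect to inclusion; then no proper subset of $K$ is a key, so $K\in Key(f)$. From $K\subseteq X$ we get $K\cap(U\setminus X)=\emptyset$, so $U\setminus X$ fails to meet the edge $K$ and is not a transversal of $Key(f)$.

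The argument is short; the only step needing a little care is the observation used in the second direction that a key always contains a member of $Key(f)$, which is precisely where finiteness of $U$ is invoked. It is also worth checking the degenerate cases for consistency: since $f(U)=U$ holds for every $f\in Cl(U)$, the set $U$ is always a key, so $Key(f)$ is never empty and the empty set (the case $X=U$) is correctly excluded from being a transversal; and for $f=m$ one has $Key(m)=\{\emptyset\}$, which has no transversal at all, matching the fact that every $X$ is then a key.
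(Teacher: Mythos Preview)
Your proof is correct and follows essentially the same route as the paper: both directions are argued by contraposition (the paper phrases it as contradiction), using that a minimal key contained in $X$ witnesses the failure of $U\setminus X$ to be a transversal, and conversely. You are slightly more explicit than the paper in invoking axiom (C2) and the finiteness of $U$ for the existence of a minimal key inside a key, and your check of the degenerate cases is a nice addition, but the underlying argument is the same.
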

\begin{proof} Assume that $U\setminus X\not\in Trs(Key(f))$. This means that there exists a minimal key $K\in Key(f)$ such that $(U\setminus X)\cap K=\emptyset$, or $K\subseteq X$. Thus, $X$ is a key of $f$, which contradicts with the fact $f(X)\not=U$.

Conversely, suppose that $f(X)=U$. It follows that there is a minimal key $K\in Key(f)$ such that $K\subseteq X$. Thus, $(U\setminus X)\cap K=\emptyset$, or $U\setminus X\not\in Trs(Key(f))$. This contradicts with the hypothesis $U\setminus X$ is a transversal of $Key(f)$.
\end{proof}
\begin{thm}\label{th31} Let $f\in Cl(U)$. Then
$$Tr(Key(f))= MI(f).$$
\end{thm}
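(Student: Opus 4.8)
The plan is to prove the equality $Tr(Key(f)) = MI(f)$ by combining Proposition~\ref{pro21}(1) with a direct characterization of $Tr(MI(f))$. Since both $Key(f)$ and $MI(f)$ are simple hypergraphs on $U$, Proposition~\ref{pro21}(1) tells us that $Tr(Key(f)) = MI(f)$ is equivalent to $Tr(MI(f)) = Key(f)$. So it suffices to show that the minimal transversals of the hypergraph of minimal independent sets are exactly the minimal keys.

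First I would establish the transversal relationship at the level of (not-necessarily-minimal) sets: a set $T \subseteq U$ is a transversal of $MI(f)$ if and only if $T$ is a key of $f$. For the backward direction, suppose $T$ is a key, i.e.\ $I(T) = \emptyset$; if $T$ failed to meet some $Y \in MI(f)$, then $Y \subseteq U \setminus T$, and since $Y$ is an independent set $Y = I(X)$ for some $X$, monotonicity would give $I(X) \subseteq I(T)$ wait—more carefully, from $Y \subseteq U\setminus T$ we get $T \subseteq U \setminus Y = f(X)$, hence $f(T) \subseteq f(f(X)) = f(X) = U \setminus Y \neq U$, contradicting that $T$ is a key. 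For the forward direction, suppose $T$ is not a key, so $I(T) \neq \emptyset$; then $I(T)$ is a nonempty independent set, so it contains some minimal independent set $Y \in MI(f)$ with $Y \subseteq I(T) = U \setminus f(T) \subseteq U \setminus T$, whence $T \cap Y = \emptyset$ and $T$ is not a transversal. (This last step uses that every nonempty element of $I(f)$ contains an element of $MI(f)$, which holds because $I(f)$ is a finite family of subsets of a finite set.)

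Having shown $Trs(MI(f)) = \{T \subseteq U : T \text{ is a key of } f\}$, the minimal elements on the left are by definition $Tr(MI(f))$, and the minimal elements on the right are by definition $Key(f)$; taking minimal elements of equal families of sets yields $Tr(MI(f)) = Key(f)$. Then Proposition~\ref{pro21}(1) gives the desired $Tr(Key(f)) = MI(f)$. Alternatively, one could run the argument directly on $Tr(Key(f))$: Lemma~\ref{le31} and Lemma~\ref{le32} already package the key facts—Lemma~\ref{le32} says $U \setminus X \in Trs(Key(f))$ iff $X$ is not a key, and one checks that the minimal such complements $U\setminus X$ correspond exactly to the maximal non-keys' structure captured by $MI(f)$; but the route through $Tr(MI(f)) = Key(f)$ is cleaner.

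The main obstacle I expect is the bookkeeping around minimality and complementation: one must be careful that "minimal transversal of $MI(f)$" genuinely matches "minimal key" rather than merely "key," and that the independent-set $Y \in MI(f)$ extracted inside $I(T)$ really is an edge of the hypergraph $MI(f)$ and not just any independent set. Both points are routine given the finiteness of $U$ and the definitions, but they are the places where a sloppy argument would go wrong. No genuinely hard estimate or construction is needed; the content is entirely in correctly applying Proposition~\ref{pro21}(1) and the two lemmas.
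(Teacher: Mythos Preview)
Your argument is correct, but it takes the dual route to the paper's: the paper proves the inclusion $Tr(Key(f))\subseteq MI(f)$ and its reverse \emph{directly}, using Lemma~\ref{le31} and Lemma~\ref{le32} to show that a minimal transversal $T$ of $Key(f)$ satisfies $U\setminus T\in Closed(f)$ (hence $T=I(U\setminus T)\in I(f)$) and then checking minimality, while you instead establish $Tr(MI(f))=Key(f)$ first and invoke Proposition~\ref{pro21}(1). In effect you are proving Corollary~\ref{co31} directly and deducing Theorem~\ref{th31} from it, whereas the paper does the reverse. Your path is arguably a bit smoother, since it avoids the intermediate step of showing $U\setminus T$ is closed; on the other hand, the paper's argument is self-contained in that it does not appeal to the hypergraph duality of Proposition~\ref{pro21}. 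The one place in your write-up that deserves an extra line is the claim that a nonempty $I(T)$ contains some $Y\in MI(f)$: this is true, but it is not because subsets of independent sets are independent (they need not be); rather, among the nonempty members of $I(f)$ lying inside $I(T)$ one takes a minimal one by finiteness, and any strictly smaller nonempty independent set would also lie inside $I(T)$, forcing minimality in all of $I(f)$.
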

\begin{proof} Suppose that $T$ is a minimal transversal of $Key(f)$. By Lemma \ref{le32} we imply that $f(U\setminus T)\not= U$. Notice that $T\not=\emptyset$. Clearly, if $U\setminus T\subset f(U\setminus T)$ then for all $K\in Key(f), \ (U\setminus f(U\setminus T))\cap K \not= \emptyset.$ This contradicts $T\in Tr(Key(f))$. Therefore, we get $U\setminus T\in Closed(f)$. Then we have
$$I(U\setminus T)=U\setminus f(U\setminus T)=U\setminus (U\setminus T)=T.$$

This means that $T\in I(f)$. We now assume that there exists a $\emptyset \not=S\in I(f)$ so that $S\subset T$. According to Lemma \ref{le31}, $f(U\setminus S)\not= U$. Moreover, by Lemma \ref{le32} we obtain that $S$ is a transversal of $Key(f)$, which contradicts with  $T\in Tr(Key(f))$. Thus, $T\in MI(f)$ holds.

Conversely, suppose that $Y\in MI(f)$. Obviously, $Y\not=\emptyset$. Using Lemma \ref{le31}, we have $f(U\setminus Y)\not=U$. This means that for each $K\in Key(f)$, $Y\cap K\not= \emptyset$. Hence, $Y\in Trs(Key(f))$. We now assume that there is a $Z\in Tr(Key(f))$ such that $Z\subset Y$. Using the above proof, we also obtain $Z\in MI(f)$. This contradicts with the fact that $MI(f)$ is a simple hypergraph. Therefore, $Y\in Tr(Key(f))$ holds.
\end{proof}
From Theorem \ref{th31} and Proposition \ref{pro21}, we obtain a representation of the set of all minimal keys of closure operations as follows
\begin{cor}\label{co31}For every $f\in Cl(U)$, then
$$Key(f)=Tr(MI(f)).$$
\end{cor}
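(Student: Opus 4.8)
The plan is to obtain the corollary as an immediate consequence of Theorem \ref{th31} together with the involutive behaviour of the transversal operator recorded in Proposition \ref{pro21}. First I would recall that both $Key(f)$ and $MI(f)$ are simple hypergraphs on $U$ — this was already noted in Section 2 for $Key(f)$ and in Section 3 for $MI(f)$ — so that Proposition \ref{pro21} is applicable to them.

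Next, starting from the identity $Tr(Key(f))=MI(f)$ of Theorem \ref{th31}, I would apply the operator $Tr(\cdot)$ to both sides, getting $Tr(Tr(Key(f)))=Tr(MI(f))$. By Proposition \ref{pro21}(3) the left-hand side collapses to $Key(f)$, which gives $Key(f)=Tr(MI(f))$, as claimed. Equivalently, one can appeal to Proposition \ref{pro21}(1) with $\mathcal H=MI(f)$ and $\mathcal G=Key(f)$: since $\mathcal H=Tr(\mathcal G)$ holds by Theorem \ref{th31}, the proposition yields $\mathcal G=Tr(\mathcal H)$, i.e. $Key(f)=Tr(MI(f))$.

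Since the argument is a one-line deduction from an already-established theorem and a cited proposition, there is essentially no substantive obstacle; the only point requiring attention — and it is handled by the earlier remarks — is checking that the two families at hand are genuinely simple hypergraphs, because Proposition \ref{pro21} is stated only in that setting. It is also worth a quick glance at degenerate instances (for example $f=m$, where $Key(m)=\{\emptyset\}$ while $MI(m)=\emptyset$) to confirm consistency with the conventions on transversals of $\emptyset$ and $\{\emptyset\}$ fixed in Section 2.
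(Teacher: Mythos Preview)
Your proposal is correct and matches the paper's own derivation: the corollary is stated there as an immediate consequence of Theorem~\ref{th31} and Proposition~\ref{pro21}, which is exactly the route you take. Your additional care in noting that $Key(f)$ and $MI(f)$ are simple hypergraphs and in checking the degenerate case $f=m$ is a welcome refinement over the paper's one-line justification.
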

The antikeys of closure operations can also be characterized by independent sets. We know that \cite{son} antikeys of a closure operation $f$ have the following basic characterization 
$$Antikey(f)=MAX(Closed(f)\setminus \{U\}).$$

Using this characterization and the above observation of $MI(f)$, we also obtain the following a representation of the set of all antikeys of closure operations

\begin{cor}\label{co32} For every $f\in Cl(U)$, then
$$Antikey(f)=\overline{MI(f)}.$$
\end{cor}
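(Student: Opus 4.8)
The plan is to derive Corollary~\ref{co32} directly from the two ingredients the author has already assembled: the known characterization $Antikey(f)=MAX(Closed(f)\setminus\{U\})$ from \cite{son}, and the observation recorded just before this corollary that $\overline{MI(f)}=MAX(Closed(f)\setminus\{U\})$ (equivalently $MI(f)=\overline{MAX(Closed(f)\setminus\{U\})}$). Since complementation within $\mathcal P(U)$ is an involution applied elementwise, these two displayed equalities say exactly the same set is described two ways, so chaining them gives $Antikey(f)=\overline{MI(f)}$ in one line.

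More explicitly, first I would recall that for a family $\mathcal S\subseteq\mathcal P(U)$ we write $\overline{\mathcal S}=\{U\setminus S : S\in\mathcal S\}$, and note $\overline{\overline{\mathcal S}}=\mathcal S$. Then I would invoke the pre-corollary observation in the form $\overline{MI(f)}=MAX(Closed(f)\setminus\{U\})$. Next I would substitute the \cite{son} identity $MAX(Closed(f)\setminus\{U\})=Antikey(f)$ into the right-hand side, yielding $\overline{MI(f)}=Antikey(f)$, which is the claim. If one prefers to start from $MI(f)=\overline{MAX(Closed(f)\setminus\{U\})}$, apply complementation to both sides and use the involution property to land in the same place.

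There is essentially no obstacle here: the corollary is a bookkeeping consequence of results the paper has just finished establishing (or imported), and the only thing worth spelling out is that the "bar" operation is an involution on subfamilies of $\mathcal P(U)$, so that $\overline{MI(f)}=MAX(Closed(f)\setminus\{U\})$ and $MI(f)=\overline{MAX(Closed(f)\setminus\{U\})}$ are genuinely equivalent restatements. If I wanted to be more self-contained and not lean on the $MI(f)$/closure-system observation, an alternative route would be to use Theorem~\ref{th31} together with Proposition~\ref{pro21}: from $Tr(Key(f))=MI(f)$ one could try to match $Tr(Key(f))$ against the antikeys; but that is strictly more work than the direct substitution, so the short argument above is the one I would present.
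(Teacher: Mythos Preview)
Your proposal is correct and follows exactly the paper's own route: the corollary is obtained by chaining the characterization $Antikey(f)=MAX(Closed(f)\setminus\{U\})$ from \cite{son} with the earlier observation $\overline{MI(f)}=MAX(Closed(f)\setminus\{U\})$. There is nothing to add.
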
 
Combining Theorem \ref{th31}, Corollary \ref{co32} and Proposition \ref{pro21}, we establish the interesting connection between minimal keys and antikeys of closure operations as follows: 
\begin{thm}Let $f\in Cl(U)$. Then
$$Antikey(f)=\overline{Tr(Key(f))}.$$
\end{thm}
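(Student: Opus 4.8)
The plan is to derive the identity directly by chaining the three results already established: Theorem~\ref{th31}, which states $Tr(Key(f)) = MI(f)$, and Corollary~\ref{co32}, which states $Antikey(f) = \overline{MI(f)}$. Substituting the former into the latter immediately yields $Antikey(f) = \overline{Tr(Key(f))}$, so the entire argument is a one-line substitution. The only thing that requires a moment of care is that the complementation operation $\overline{\,\cdot\,}$ in Corollary~\ref{co32} is applied to the hypergraph $MI(f)$, and we want to be sure that replacing $MI(f)$ by the equal object $Tr(Key(f))$ is legitimate; since $\overline{\,\cdot\,}$ is a well-defined function of a family of sets (taking each edge $Y$ to $U \setminus Y$) and Theorem~\ref{th31} asserts genuine equality of the two families, this is automatic.

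First I would restate the two ingredients explicitly: by Theorem~\ref{th31}, $Tr(Key(f)) = MI(f)$; by Corollary~\ref{co32}, $Antikey(f) = \overline{MI(f)}$. Then I would simply substitute. One could alternatively note that Proposition~\ref{pro21} is mentioned as part of the toolkit, but in fact no appeal to the involutive property of $Tr$ is needed for this particular identity --- it is only needed to pass between Theorem~\ref{th31} and Corollary~\ref{co31}. So the cleanest write-up just cites Theorem~\ref{th31} and Corollary~\ref{co32}. If one wanted a self-contained derivation not routing through Corollary~\ref{co32}, one could instead start from the characterization $Antikey(f) = MAX(Closed(f)\setminus\{U\})$ recalled from \cite{son}, combine it with the observation $\overline{MI(f)} = MAX(Closed(f)\setminus\{U\})$ stated in the text, and then apply Theorem~\ref{th31}; but this is strictly more work.

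There is essentially no obstacle here: the theorem is a corollary-style consequence of material proved immediately beforehand, and the proof is a short chain of equalities. The only pitfall to avoid is a notational one --- making sure that $\overline{MI(f)}$ and $\overline{Tr(Key(f))}$ are read as ``the hypergraph obtained by complementing each edge,'' consistent with the earlier use of the bar notation (e.g. in $\overline{MI(f)} = MAX(Closed(f)\setminus\{U\})$ and in $MI(f) = \overline{MAX(Closed(f)\setminus\{U\})}$), rather than as set-theoretic complement of a family. Given that the bar has been used consistently this way throughout Section~3, no clarification is strictly necessary, and the proof can be stated in two sentences.

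\begin{proof}
By Theorem~\ref{th31} we have $Tr(Key(f)) = MI(f)$, and by Corollary~\ref{co32} we have $Antikey(f) = \overline{MI(f)}$. Substituting the first equality into the second yields
$$Antikey(f) = \overline{MI(f)} = \overline{Tr(Key(f))},$$
as claimed.
\end{proof}
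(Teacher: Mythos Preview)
Your proof is correct and follows essentially the same approach as the paper, which simply states that the result is obtained by combining Theorem~\ref{th31}, Corollary~\ref{co32}, and Proposition~\ref{pro21}. Your observation that Proposition~\ref{pro21} is not actually needed---the substitution of $Tr(Key(f))=MI(f)$ into $Antikey(f)=\overline{MI(f)}$ suffices---is accurate and sharpens the argument slightly.
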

We consider again Example \ref{ex21}. Then we get  

$\bullet$ $Key(m)=Tr(MI(m))=Tr(\emptyset)=\{\emptyset\}$,

$Antikey(m)=\overline{MI(m)}=\emptyset$.

$\bullet$ $Key(i)=Tr(MI(i))=Tr(\{\{a\}:a\in U\})=\{U\}$,

$Antikey(i)=\overline{MI(i)}=\{U\setminus \{a\}:a\in U\}$.

$\bullet$ $Key(t_M)=Tr(MI(t_M))=Tr(\{\{a\}: a\in U\setminus M\})=\{U\setminus M\}$,

$Antikey(t_M)=\overline{MI(t_M)}=\{U\setminus \{a\}:a\in U\setminus M\}$.

\section{Generating the set of all minimal keys and antikeys of a closure operation based on independent sets}
Based on the results presented in Section 3, in this section, first we shall present a effective combinatorial algorithm finding all minimal keys of a given closure operation by independent sets. However, notice that finding a minimal key of a closure operation $f\in Cl(U)$ is efficiently possible: it is easy to see that  $U=\{a_1,a_2,\ldots,a_n\}$ is a key $f$. If we define $K_0=U$, and for all $i=1,2,\ldots,n$
$$K_i= \begin{cases} K_{i-1}\setminus \{a_i\}, & \text {if } f(K_{i-1}\setminus \{a_i\})=U;\\
		K_{i-1}, & \text { otherwise}
\end{cases} $$ 
then $K_n\in Key(f)$. 

Therefore, it can be seen that a minimal key of $f$ can be found in polynomial time in $n$. 

\begin{algorithm}[H]
\caption{(Finding all minimal keys based on independent sets)}
\label{al41}

Input: $f\in Cl(U)$ with $U=\{a_1,a_2,\ldots,a_n\}$.

Output: $Key(f)$.

Method:

Step 1: Construct the family of all independent sets of $f$:
$$I(f)=\{I(X): X\subseteq U\},$$
where $I(X)=\{a\in U: a\not\in f(X)\}.$

Step 2: From $I(f)$ we compute the family of all minimal independent sets of $f$:
$$MI(f)=\{Y\in I(f): Y\not=\emptyset, (\forall Z\in I(f)\Rightarrow Y\not\subseteq Z)\}.$$

Step 3: Using Algorithm \ref{al21} we compute $Tr(MI(f))$. Then let $Key(f)=Tr(MI(f))$.
\end{algorithm}
According to Corollary \ref{co31}, Algorithm \ref{al41} computes exactly $Key(f)$. Clearly, the number of elements of $MI(f)$ is very small, Algorithm \ref{al21} computing $Tr(MI(f))$ is very effective. Thus,  it can be seen that the time complexity of our algorithm is the time complexity of Step 1 and Step 2. This means that the time complexity of our algorithm is exponential in the $n$.

Next, we present a effective combinatorial algorithm finding all antikeys of a given closure operation by independent sets.

\begin{algorithm}[H]
\caption{(Finding all antikeys based on independent sets)}
\label{al42}

Input: $f\in Cl(U)$ with $U=\{a_1,a_2,\ldots,a_n\}$.

Output: $Antikey(f)$.

Method:

Step 1: Construct the family of all independent sets of $f$:
$$I(f)=\{I(X): X\subseteq U\},$$
where $I(X)=\{a\in U: a\not\in f(X)\}.$

Step 2: From $I(f)$ we compute the family of all minimal independent sets of $f$:
$$MI(f)=\{Y\in I(f): Y\not=\emptyset, (\forall Z\in I(f)\Rightarrow Y\not\subseteq Z)\}.$$

Step 3: From $MI(f)$ we construct $\overline{MI(f)}$. Then let $Antikey(f)=\overline{MI(f)}$.
\end{algorithm}
By Corollary \ref{co32}, it is easy to see that Algorithm \ref{al42} computes exactly $Antikey(f)$. Because the number of elements of $MI(f)$ is very small, Step 3 requires polynomial time in $n$. Thus, it can be seen that the time complexity of our algorithm is the time complexity of Step 1 and Step 2.  This means that the time complexity of our algorithm is exponential in the $n$.

The following example shows that for a given closure operation $f\in Cl(U)$, Algorithm \ref{al41} and Algorithm \ref{al42} can be applied to find all minimal keys and antikeys of $f$.
\begin{exam}\label{ex31} Let us consider the mapping $f_a:\mathcal{P}(U)\to\mathcal{P}(U)$, where $a\in U$, as follows:
$$f_a(X)= \begin{cases} U, & \text {if } a\in X;\\
		X, & \text { otherwise}
\end{cases} $$ 

It is easy to see that $f_a\in Cl(U)$. Then we have 
$$I(f_a)=\{U\setminus f_a(X): X\subseteq U\}$$
and thus 
$$MI(f)=\{\{a\}\}.$$

It implies that 
$$\overline{MI(f)}=\{U\setminus \{a\}\}$$
$$Tr(MI(f))=\{\{a\}\}.$$

Consequently, the set of all minimal keys and the set of all antikeys of $f$ are
$$Key(f)=\{\{a\}\}$$
$$Antikey(f)=\{U\setminus \{a\}\}.$$
\end{exam}

\section{Complexity of  the problem of nonkeys of closure operations}
Let $f\in Cl(U)$. Denote $\mathcal S=\{X: X \text{ is not a key of } f\}$. Obviously, $\mathcal S$ is the set of all nonkeys of $f$. From the definition of antikeys, we can see that $Antikey(f)$ is the set of all maximal nonkeys of $f$. It is known that maximal nonkeys play important roles for extremal problems of closure operations as well as for many other problems. This section we give an  NP-complete problem concerning nonkey of closure operations. The problem we will consider can be described as follows:

\textbf{Name:} NONKEY OF CLOSURE OPERATION (NONKEY)

\textbf{Instance:} A closure operation $f\in Cl(U)$ and an integer $k$ such that $k\leq |U|$. 

\textbf{Question:} Is there a nonkey $X$ such that $k\leq |X|$.

In order to show the NP-completeness of NONKEY, we will use the independent set problem which is known to be NP-complete \cite{ga}:

\textbf{Name:} INDEPENDENT SET  (IS)

\textbf{Instance:}  A integer $k$ and an undirected graph $G=(V,E)$, where $V$ is the set of vertices and $E$ is the set of edges.

\textbf{Question:} Is there an independent set $I$ having cardinality greater than or equal to $k$.
\begin{thm}\label{th51} The NONKEY problem is NP-complete.
\end{thm}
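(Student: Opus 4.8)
The plan is to prove the two halves separately: first that NONKEY lies in NP, and then that it is NP-hard by a polynomial reduction from IS. For membership in NP, a nonkey $X$ with $|X|\ge k$ is a certificate of polynomial size; to verify it one only needs to check $|X|\ge k$ and $f(X)\ne U$, and the latter is a single evaluation of $f$, which is polynomial for the standard encodings of a closure operation (for instance, if $f$ is presented by a family of functional dependencies, then $f(X)$ is the attribute closure and is computable in linear time). Hence NONKEY $\in$ NP.

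For hardness I would reduce IS to NONKEY as follows. Given an IS instance $(G,k)$ with $G=(V,E)$, assume without loss of generality $1\le k\le|V|$ and first dispose of two degenerate situations in which IS is decidable in polynomial time: if $E=\emptyset$ then IS is a yes-instance, and if $k=|V|$ then IS is a yes-instance iff $E=\emptyset$; in these cases the reduction simply outputs a fixed trivial yes- or no-instance of NONKEY. Otherwise $E\ne\emptyset$ and $k\le|V|-1$, and I put $U=V$ and take the closure operation
$$f(X)=\begin{cases}U,&\text{if }\{u,v\}\subseteq X\text{ for some }\{u,v\}\in E,\\ X,&\text{otherwise,}\end{cases}$$
that is, the closure operation generated by the functional dependencies $\{u,v\}\to U$ ranging over the edges $\{u,v\}\in E$. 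One checks routinely that $f$ satisfies (C1)--(C3), that $f(X)$ is computable in time polynomial in $|V|+|E|$, and that the whole construction is polynomial; the output is the instance $(f,k)$.

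The correctness of the reduction rests on a single observation: for $X\subseteq U$, the set $X$ is a nonkey of $f$ if and only if $X$ is an independent set of $G$. Indeed, if $X$ is independent then $f(X)=X$, and since $E\ne\emptyset$ forces $X\ne U$, $X$ is a nonkey; conversely, if $X$ contains an edge then $f(X)=U$, so $X$ is a key. Consequently $G$ has an independent set of cardinality $\ge k$ exactly when $f$ has a nonkey of cardinality $\ge k$, which is precisely what the reduction needs, and together with the NP upper bound this yields the theorem. I expect the only genuinely delicate point to be the boundary bookkeeping, forced by the fact that $U$ is always a key so that a nonkey has size at most $|U|-1$: this is exactly why the case $k=|V|$ (and, when $E=\emptyset$, the whole instance) must be handled separately rather than through the generic construction; the remaining ingredients — checking the closure axioms for the displayed $f$ and evaluating it in polynomial time — are routine.
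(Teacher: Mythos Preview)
Your proof is correct and uses the same reduction from INDEPENDENT SET as the paper: take $U=V$ and build a closure operation in which a set is a key precisely when it contains an edge, so that nonkeys coincide with independent sets of $G$. Your execution is in fact tidier than the paper's: the paper's displayed $f$ puts $f(X)=V$ only when $X$ literally \emph{equals} an edge $\{u,v\}\in E$, which as written is not monotone (the ensuing argument, with $Key(f)=E$ and ``$\{u,v\}\not\subseteq X$'', clearly intends your containment version), and you also dispose of the boundary cases $E=\emptyset$ and $k=|V|$ that the paper passes over.
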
 
\begin{proof} We nondeterministically choose a set $X$ such that $k\leq |X|$ and decide whether $X$ is a nonkey of $f$. Obviously, by the definition of closure operation $f\in Cl(U)$, our algorithm is nondeterministic polynomial. Therefore, our problem lies in NP.

Now, we shall prove that IS problem is polynomially reducible to our problem. Indeed, let $G=(V,E)$ be an undirected graph with $k\leq |V|$. We define a map $f: \mathcal P(U)\to \mathcal P(U)$ as follows: 
$$f(X)= \begin{cases} V, & \text {if } X=\{u,v\}\in E\\
		X, & \text {otherwise}
\end{cases} $$ 
where $U=V$. Note that for each $X\not\in E$ we have $f(X)=i(X)$. It is easy to see that $f\in Cl(U)$, and $f$ is constructed in polynomial  time in the size $G$.

According to the definition of the set of edges, it is clear that $E$ is a simple hypergraph on $V$. Since $E$ is the set of edges, and by the definition of the minimal key of closure operation, we can obtain that $Key(f)=E$. Hence, $X$ is not a key of $f$ if and only if $\{u,v\}\not\subseteq X$ for all $\{u,v\}\in E$. Consequently, $X$ is nonkey of $f$ if and only if $X$ is an independent set of $G$.
\end{proof}
 Because $Antikey(f)$ is the set of all maximal nonkeys of closure operation $f$, according to Theorem \ref{th51} we can see that if $NP\not= P$, then there is no polynomial time algorithm finding $Antikey(f)$ from a given closure operation $f$.

\section{Conclusions}
We studied independent sets of closure operations and characterized minimal keys and antikeys of closure operations in terms of independent sets. We established  an expression the connection between minimal keys and antikeys of closure operations based on independent sets. We also constructed two combinatorial algorithms for finding all minimal keys and all antikeys of a given closure operation based on independent sets. We showed that time complexitys of these algorithms are exponential in $n$. Finally, we proved that for a given closure operation $f$ and an integer $k$, the problem deciding whether there exists a nonkey of $f$ having cardinality greater than or equal to $k$ is NP-complete.  

\section*{References}

\end{document}